\documentclass[11pt]{article} 

\usepackage[titlenumbered,ruled]{algorithm2e} 
\usepackage[margin=1.2in]{geometry}
\usepackage{natbib} 
\usepackage{graphicx} 
\usepackage{booktabs}
\usepackage{amsmath,amsthm,amssymb} 
\usepackage{bbold} 
\usepackage{url}


\theoremstyle{definition}

\newtheorem{theorem}{Theorem}[section]
\DeclareMathOperator*{\argmin}{arg\,min}

\begin{document}

\title{The Sliding Window Discrete Fourier Transform}
\author{Lee F. Richardson \and William F. Eddy}
\date{\today}
\maketitle

\begin{abstract}
	This paper introduces a new tool for time-series analysis: the Sliding Window Discrete Fourier Transform (SWDFT). The SWDFT is especially useful for time-series with local-in-time periodic components. We define a $5$-parameter model for noiseless local periodic signals, then study the SWDFT of this model. Our study illustrates several key concepts crucial to analyzing time-series with the SWDFT, in particular Aliasing, Leakage, and Ringing. We also show how these ideas extend to $R > 1$ local periodic components, using the linearity property of the Fourier transform. Next, we propose a simple procedure for estimating the $5$ parameters of our local periodic signal model using the SWDFT. Our estimation procedure speeds up computation by using a trigonometric identity that linearizes estimation of $2$ of the $5$ parameters. We conclude with a very small Monte Carlo simulation study of our estimation procedure under different levels of noise. 
\end{abstract}

\section{Introduction}
\label{sec:intro}
Time series methods are typically partitioned into either the time or frequency domain. But there are many cases, such as non-stationary time-series, where the frequency profile changes over time. For these time-series, analysts desire methods that combine both the time {\bf and} frequency domains. Fortunately, scientists and engineers since \cite{gabor1946theory} have invented many ``time-frequency'' methods. This paper focuses on one: The Sliding Window Discrete Fourier Transform (SWDFT). The SWDFT is also known as the ``short-time Fourier transform'', ``windowed Fourier transform'', among other names. 

Our goal is introducing the SWDFT as a tool in the time-series toolbox. We want to show that the SWDFT is useful for local-in-time periodic signals, such as the idealized example shown in Figure \ref{fig:local-periodic}. Just as large discrete Fourier transform (DFT) coefficients provide evidence for global periodic components (\cite{fisher1929tests}), large SWDFT coefficients provide evidence of local periodic components. 

\begin{figure}[ht]
	\centering
	\includegraphics[width = 16cm]{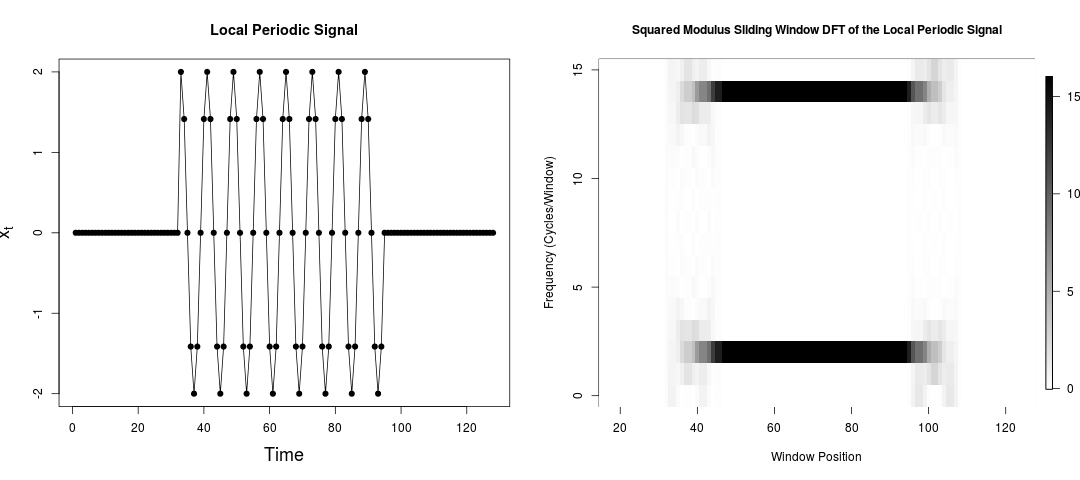} 
	\caption{Left: A local periodic signal: $x_t = 2 \cos(\frac{\pi t}{4}); 31 \leq t \leq 94$. Right: The squared modulus SWDFT coefficients of $x_t$ for a window of length $16$.}
	\label{fig:local-periodic}
\end{figure}

The rest of the paper is organized as follows. Section \ref{sec:swft} defines the SWDFT and some of its key properties. Section \ref{sec:onesignal} introduces a $5$-parameter model for one noiseless local periodic signal, then studies the SWDFT under different parameterizations of this model. Section \ref{sec:rsignals} briefly extends the results from Section \ref{sec:onesignal} to $R > 1$ local periodic signals. Section \ref{sec:estimation} turns to estimation, and we propose a simple estimation procedure for the $5$-parameters of our local periodic signal model using the SWDFT. Section \ref{sec:estimation} concludes with a brief Monte Carlo simulation study on the accuracy of our estimation procedure under different levels of noise. 

\section{The Sliding Window Discrete Fourier Transform (SWDFT)}
\label{sec:swft}
The Sliding Window Discrete Fourier Transform (SWDFT) computes a time-frequency representation of a signal. Simply described, the SWDFT takes sequential discrete Fourier transforms of a signal multiplied by a rectangular sliding window function, where the window function is only nonzero for a short amount of time. This section defines the SWDFT and some of its properties. 

Since the SWDFT is a sequence of DFT's, we first define the DFT. Let ${\bf x} = [x_0, x_1, \ldots, x_{N - 1}]$ be a length $N$ real-valued signal. The DFT of {\bf x} is:

\begin{eqnarray}
	\label{eq:dft}
	a_k &=& \frac{1}{\sqrt{N}} \sum_{j = 0}^{N - 1} x_j \omega_N^{-jk} \nonumber \\
	&& k = 0, 1, \ldots, N - 1 
\end{eqnarray}

\noindent where $\omega_N = e^{\frac{i 2 \pi}{N}} = \cos(\frac{2 \pi}{N}) + i \sin(\frac{2 \pi}{N})$. 

Next, we define the rectangular sliding window function using the following notation for an indicator function:

\[
  \mathbb{1}_{x, y}(j) =
  \begin{cases}
    1 & x \leq j \leq y \\
    0 & \text{ otherwise} \\
  \end{cases}
\]

With this notation, $\mathbb{1}_{p - n + 1, p}(j)$ is the rectangular sliding window function for a length $n$ window at position $p$ in the variable $j$. By applying this window-function to the input signal {\bf x}, the SWDFT is:
	
\begin{eqnarray}
	\label{eq:swft}
	a_{k, p} &=& \frac{1}{\sqrt{n}} \sum_{j = 0}^{n - 1} x_{p - n + 1 + j} \omega_n^{-jk} \nonumber \\
	&& k = 0, 1, \ldots, n - 1 \nonumber \\
	&& p = n - 1, n, \ldots , N - 1 
\end{eqnarray}

\cite{richardson2017algorithm} give an $O(N n)$ algorithm to compute Equation \ref{eq:swft}. In comparison, computing Equation \ref{eq:swft} directly takes $O(N n^2)$ operations, and taking a Fast Fourier Transform (FFT) in each window position takes $O(N n \log(n))$. 

The SWDFT outputs an $n \times P$ array, where $n$ is the number of frequencies and $P = N - n + 1$ is the number of window positions. A useful way to think about the SWDFT output is as a multivariate time-series, where each time-series corresponds to DFT coefficients for a Fourier frequency $(\frac{2 \pi k}{n})$ for $k = 0, \ldots, n - 1$ at each window position (Section 4.1 of \cite{bloomfield2004fourier} gives a detailed description of the Fourier frequencies). For example, $a_{k, .}$ is the time-series for frequency $\frac{2 \pi k}{n}$:

\begin{eqnarray}
	\label{eq:timeseriesk}
	a_{k, .} &=& [a_{k, n - 1}, a_{k, n}, \ldots, a_{k, N - 1}] 
\end{eqnarray} 

\noindent To be concise, we sometimes refer to $a_{k, .}$ as the ``frequency $k$ time-series'', although the frequency is actually $\frac{2 \pi k}{n}$. 

The SWDFT coefficients $a_{k, p}$ are complex numbers, meaning there are different ways to view their output. Some common complex-number outputs are:

\begin{itemize}
	\item $a_{k, p}$: Complex Number
	\item $Re(a_{k, p})$: Real part 
	\item $Im(a_{k, p})$: Imaginary part 	
	\item $|a_{k, p}|^2 = Re(a_{k, p})^2 + Im(a_{k, p})^2$: The squared modulus. 
	\item $\arg(a_{k, p})$: Argument (or, Phase)
\end{itemize}

For local-in-time periodic signals, we focus on the squared modulus of SWDFT coefficients $|a_{k, p}|^2$, since the amplitude is relatively large when the window is on a periodic part of the signal (see Figure \ref{fig:local-periodic}). Following signal processing conventions, we sometimes refer to the squared modulus SWDFT coefficients as the ``energy'' at frequency $k$ and window position $p$. 

\section{The SWDFT for One Local Signal}
\label{sec:onesignal}
Since the SWDFT is useful for local-in-time periodic signals, this section proposes a parametric-model for local-in-time periodic signals, and studies the SWDFT of this model. This study illustrates several key concepts, both mathematically and graphically, for analyzing periodic signals using the SWDFT. We structure this section around three key concepts: Aliasing, Leakage, and Ringing. We selected these three because, in our minds, understanding them greatly enhances understanding of the SWDFT. 

\subsection{Defining a Local Signal}
\label{sec:onelocaldef}
We define a local-in-time periodic signal by applying the indicator function for an interval to a periodic signal. Let $g_t$ be a length $N$ periodic signal:

\begin{eqnarray}
	\label{eq:globalsignal}
	g_t &=& A \cos(\frac{2 \pi F t}{N} + \phi)  \nonumber \\
	&& t = 0, 1, \ldots N - 1
\end{eqnarray}

\noindent Henceforth, when we say ``local periodic signal'', we are referring to a local-in-time periodic signal. Let $x_t$ be a local periodic signal:

\begin{eqnarray}
	\label{eq:localsignal}
	x_t &=& g_t \cdot \mathbb{1}_{S, S + L - 1}(t)
\end{eqnarray}

\noindent Local periodic signals $(x_t)$ have five parameters:

\begin{itemize}
	\item $S$: Start in time of local signal. $S \in \{0, 1, \ldots, N - 2\}$. Integer
	\item $L$: Length in time of local signal. $L \in \{1, 2, \ldots, N - S\}$. Integer
	\item $A$: Amplitude. $A \in [0, \infty]$. Real Number 
	\item $F$: Frequency, number of complete cycles in length $N$ signal. $F \in [0, \infty]$. Real Number
	\item $\phi$: Phase. $\phi \in [0, 2 \pi]$. Real Number 
\end{itemize}

We re-write the SWDFT of $x_t$ in a form that helps explain its behavior. Recall Euler's identity:

\begin{eqnarray}
	\label{eq:euler}
	\cos(x) &=& \frac{1}{2} (e^{ix} + e^{-ix}) \nonumber 
\end{eqnarray}

\noindent To be concise, let $\hat{p} = p -n + 1$. Then substitute $x_t$ into the SWDFT (Equation \ref{eq:swft}):

\begin{eqnarray}
	\label{eq:local-signal-swft}
	a_{k, p} &=& \frac{1}{\sqrt{n}} \sum_{j = 0}^{n - 1} x_{\hat{p} + j} \omega_n^{-jk} \nonumber \\	
	&=& \frac{1}{\sqrt{n}} \sum_{j = 0}^{n - 1} A \cos(\frac{2 \pi (\hat{p} + j) F}{N} + \phi) \mathbb{1}_{S, S + L - 1}(\hat{p} + j) \omega_n^{-jk} \nonumber \\
	&=& \frac{A}{2 \sqrt{n}} \sum_{j = 0}^{n - 1} [e^{i(\frac{2 \pi (\hat{p} + j) F}{N} + \phi)} + e^{-i(\frac{2 \pi (\hat{p} + j) F}{N} + \phi)}] \mathbb{1}_{S, S + L - 1}(\hat{p} + j) \omega_n^{-jk} \nonumber \\
	&=& \frac{A}{2 \sqrt{n}}[e^{i \phi} \omega_n^{f \hat{p}} \sum_{j = 0}^{n - 1} \omega_n^{-j(k - f)} \mathbb{1}_{S, S + L - 1}(\hat{p} + j) + e^{-i \phi} \omega_n^{-f \hat{p}} \sum_{j = 0}^{n - 1} \omega_n^{-j(k + f)} \mathbb{1}_{S, S + L - 1}(\hat{p} + j)] \nonumber \\
\end{eqnarray}

\noindent where $f = \frac{nF}{N}$: the number of cycles in a length $n$ window. We return to this expression several times throughout the paper. 

\subsection{Aliasing}
\label{sec:aliasing}
In the discrete Fourier setting, since we only have $N$ data-points, it is important to understand the range of frequencies we can detect. Fortunately, the detectable frequencies are precisely described by the concept of ``aliasing''. Aliasing means that one frequency serves as an ``alias'' for another, and the two frequencies are indistinguishable with discrete data. The largest detectable frequency is known as the {\bf Nyquist frequency}, defined as $\frac{1}{2 \delta}$, where $\delta$ is the size of the sampling interval (we assume $\delta = 1$ henceforth). The Nyquist frequency is the largest detectable frequency because any frequency larger than $\frac{1}{2}$ can be {\it folded} into the range $[0, \frac{1}{2}]$. To see this, recall our periodic signal:

\begin{eqnarray}
	g_t &=& A \cos(\frac{2 \pi t F}{N} + \phi)
\end{eqnarray}

\noindent assume $A = 1$ and $\phi = 0$ for simplicity, and let $f = \frac{F}{N}$ for clarity. Next, let $Q$ be an integer such that $Q - f = f' \in [0, \frac{1}{2}]$. Then we have:

\begin{eqnarray}
	\cos(2 \pi t f) &=& \cos(2 \pi t (Q - f')) \nonumber  \\
	&=& \cos(2 \pi t Q - 2 \pi t f') \nonumber \\
	&=& \cos(-2 \pi t f') \nonumber \\
	&=& \cos(2 \pi t f') \nonumber 
\end{eqnarray}
	
The second-to-third line follows because $Qt$ is an integer and cosine has a period of $2 \pi$, and the third-to-fourth line follows because $\cos(-x) = \cos(x)$. The same derivation shows that $\sin(2 \pi f t) = -\sin(2 \pi f' t)$ (Section 2.5 of \cite{bloomfield2004fourier}). 

This derivation shows that frequencies $f$ and $f'$ are aliases, and more generally, any frequency $f$ can be {\it folded} into the range $[0, \frac{1}{2}]$. Frequencies in this range are commonly called the {\bf principal aliases}. It is typical to restrict SWDFT analysis to frequencies $\frac{k}{n} \in [0, \frac{1}{2}]$, meaning that we only consider the coefficients $0 \leq k \leq \frac{n}{2}$.

In sum, aliasing is important because it provides a straightforward description of detectable frequencies. For example, if you are designing an experiment that looks for oscillations at a specific frequency (say, $60$ Hz), aliasing determines how fast you need to sample. 

\subsection{Leakage}
\label{sec:globalperiodic}
The next key concept is {\bf leakage}. Leakage occurs when the local periodic signal's components ($\frac{2 \pi f}{n}$) are {\bf not} at Fourier frequencies. Alternatively, we can say that leakage occurs when the number of cycles in a length $n$ window ($f = \frac{nF}{N}$) is not an integer. Leakage means that energy ``leaks'' into all SWDFT coefficients. With no leakage, the only nonzero SWDFT coefficients occur at the Fourier frequency matching the true frequency component. Figure \ref{fig:leakage-noleakage} compares two signals with and without leakage. 

\begin{figure}[ht]
	\centering
	\includegraphics[width = 12cm]{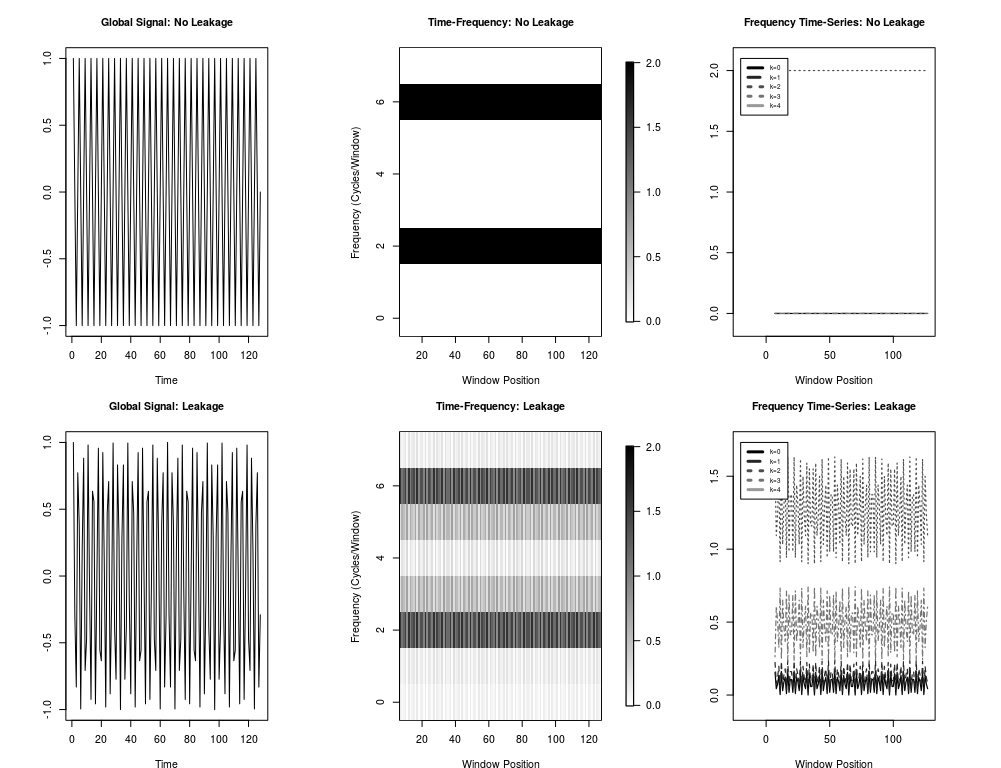} 
	\caption{Two global periodic signals, with (bottom panels), and without (top panels) leakage. The top left panel is the signal $\cos(\frac{2 \pi 32 t}{128})$, resulting in $f = 2$ cycles per window. Since there is an integer number of cycles, the SWDFT has {\bf no leakage}, and all the energy is concentrated at frequency $f = 2$ and its alias $f = 6$. The top middle panel is the standard time-frequency plot as in Figure \ref{fig:local-periodic}. The top right panel shows each unique squared-modulus frequency time-series, in this case $|a_{0, .}|^2, |a_{1, .}|^2$, \ldots, $|a_{4, .}|^2$. The bottom three panels show the same plots, except now they correspond to the signal $\cos(\frac{2 \pi 38 t}{128})$, which results in $f = 2.375$ cycles per window. Since there is {\bf leakage}, there is nonzero energy at all frequency time-series, and the largest share goes to $k = 2$: the Fourier frequency closest to the true frequency ($f = 2.375$).}
	\label{fig:leakage-noleakage}	
\end{figure}

To understand leakage mathematically, we slightly adapt the Dirichlet kernel $D_n$ from Section 2.2 of \cite{bloomfield2004fourier}. Let $\mathbb{N}$ be the natural numbers, then the Dirichlet kernel is:

\[
  \label{eq:dirichlet-kernel}
  D_n(x) =
  \begin{cases}
  	n & x = 2 \pi m , m \in 2 \mathbb{N} \\
  	-n & x = 2 \pi m , m \in 2 \mathbb{N} + 1 \\
  	0 & x = \frac{2 \pi m}{n}, m \in \mathbb{N} \\
  	\frac{ \sin(\frac{nx}{2}) }{ \sin(\frac{n}{2}) } & \text{otherwise} 
  \end{cases}
\]

\noindent where $n$ is an integer (in our case, the SWDFT window size). Assume $S = 0$ and $L = N$ for simplicity. Then using the following identity for sums of complex exponentials (derived in Chapter 2 of \cite{richardson2018thesis}):

\begin{eqnarray}
	\label{eq:general-complex-sum}
	\sum_{j = a}^{b} e^{i j x}  &=& e^{ix \frac{(a + b)}{2}} D_{b - a + 1}(x)
\end{eqnarray}

\noindent We can re-write Equation \ref{eq:local-signal-swft} using the Dirichlet kernel and Equation \ref{eq:general-complex-sum}:

\begin{eqnarray}
	\label{eq:signaldir}
	a_{k, p} &=& \frac{A}{\sqrt{2n}}[e^{i \phi} \omega_N^{F \hat{p}} e^{-i(\frac{n - 1}{n})(k - f)} D_n(\frac{2 \pi (k - f)}{n}) + e^{-i \phi} \omega_N^{-F \hat{p}} e^{-i(\frac{n - 1}{n})(k + f)} D_n(\frac{2 \pi (k + f)}{n})] \nonumber \\
\end{eqnarray}

The key point is that since $k$ is always an integer, when $f$ is an integer, $k \pm f$ is an integer. And when $k \pm f$ is an integer, $D_n(\frac{2 \pi (k \pm f)}{n}) = 0$, unless $k = f$. And finally, when $D_n(\frac{2 \pi (k \pm f)}{n}) = 0$, Equation \ref{eq:signaldir} equals 0. This explains why there is no leakage ($a_{k, p} = 0$) when $k \neq f$ and $f$ is an integer. When $k = f$, Equation \ref{eq:signaldir} simplifies to:

\begin{eqnarray}
	\noindent 
	a_{f, p} &=& A \sqrt{\frac{n}{2}} e^{i(\frac{2 \pi f \hat{p}}{n} + \phi)}
\end{eqnarray}

\noindent And the real, imaginary, and squared modulus parts are:

\begin{eqnarray}
	Re(a_{f, p}) &=& A \sqrt{\frac{n}{2}} \cos(\frac{2 \pi f \hat{p}}{n} + \phi) \nonumber \\
	Im(a_{f, p}) &=& A \sqrt{\frac{n}{2}} \sin(\frac{2 \pi f \hat{p}}{n} + \phi) \nonumber \\
	|a_{f, p}|^2 &=& \frac{A^2 n}{2} \nonumber \\
\end{eqnarray}

Notably, the squared modulus $|a_{f, p}|^2$ is large and constant, and the real and imaginary parts are smaller and oscillate. This explains why the squared modulus SWDFT coefficients are better at identifying local periodic signals than either the real or imaginary parts. 

The only exception is when $k = f = \frac{n}{2}$, and $n$ is even. In this case, the second term in Equation \ref{eq:signaldir} does not vanish, and we have:

\begin{eqnarray}
	a_{\frac{n}{2}, p} &=& A \sqrt{\frac{n}{2}} [e^{i(\frac{2 \pi f \hat{p}}{n} + \phi)} - e^{i(\frac{-2 \pi f \hat{p}}{n} + \phi + \frac{n - 1}{n})}]
\end{eqnarray}

Figure \ref{fig:dirichlet} shows what happens to the Dirichlet kernel when $k = f = \frac{n}{2}$. Since the Dirichlet kernel is periodic, both $D_n(0)$ and $D_n(n)$ equal $n$. 

That explains why there is no leakage when $f$ is an integer. When $f$ is {\it not} an integer, the bottom panel of Figure \ref{fig:leakage-noleakage} shows a signal and its corresponding SWDFT. The key observation is that while all SWDFT coefficients are nonzero, the coefficients {\bf closest} to the true frequency are the largest. We further illustrate this important fact in Figure \ref{fig:frequency-evolution}. Figure \ref{fig:frequency-evolution} shows the squared modulus of frequency time-series $2$ and $3$ ($|a_{2, .}|^2$) and $|a_{3, .}|^2$) for ten different periodic signals with frequencies $f = 2, 2.1, \ldots, 2.9$. The key point is that as the true frequency gets {\bf closer} to a particular Fourier frequency, the size of the squared modulus coefficients at that frequency get larger. 

\begin{figure}[ht]
	\centering
	\includegraphics[width = 16cm]{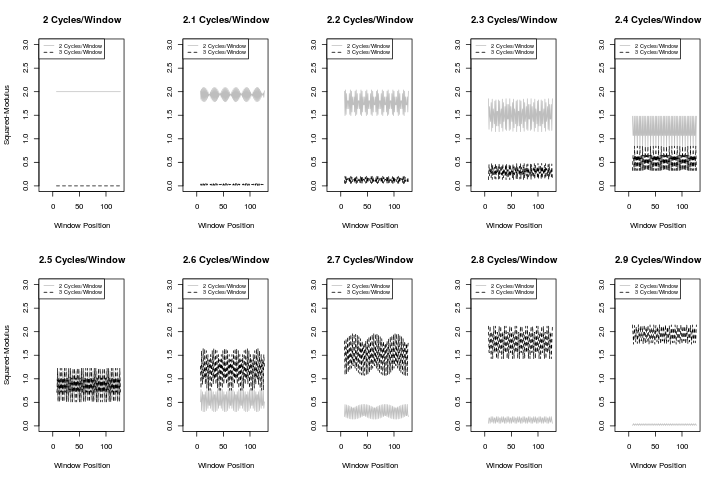} 
	\caption{Time-series of squared modulus SWDFT coefficients corresponding to the Fourier frequencies with two cycles per window $a_{2, .} = [a_{2, n - 1}, a_{2, n}, \ldots, a_{2, N - 1}]$ and three cycles per window $a_{3, .} = [a_{3, n - 1}, a_{3, n}, \ldots, a_{3, N - 1}]$, where the true number of cycles per window ranges from $f = 2, 2.1, \ldots, 2.9$. This illustrates that Fourier frequencies {\bf closest} to the true frequency have the largest squared modulus coefficients. For example, when $f = 2$, coefficients $a_{2, .}$ get 100$\%$ of the energy, and when $f = 2.5$, coefficients $a_{2, .}$ and $a_{3, .}$ split the energy (almost) equally.}
	\label{fig:frequency-evolution}
\end{figure}

While Figure \ref{fig:frequency-evolution} gives a graphical explanation, the Dirichlet kernel provides a mathematical explanation of why Fourier frequencies close to the true frequency are larger. Figure \ref{fig:dirichlet} shows both the Dirichlet kernel (top panel) and the Dirichlet Weight (bottom panel), where the Dirichlet Weight is the Dirichlet Kernel multiplied by the complex constant that appears in Equation \ref{eq:signaldir}. Both quantities {\bf peak} when $k \pm f = 0$. But, since both $D_n(\frac{2 \pi (k \pm f)}{n})$ and $DW_n(\frac{2 \pi (k \pm f)}{n})$ are continuous, their values remain large when $k \pm f$ is close to 0. This explains why even if $f$ is not an integer, Fourier frequencies {\bf closest} to $f$ have the largest SWDFT coefficients. 

In sum, leakage occurs when the true frequency of a periodic signal is not a Fourier frequency. With leakage, all SWDFT coefficients are nonzero, but the Fourier frequency {\bf closest} to the true frequency will have the largest squared modulus SWDFT coefficients. 

\begin{figure}[ht]
	\centering
	\includegraphics[width = 12cm, height=10cm]{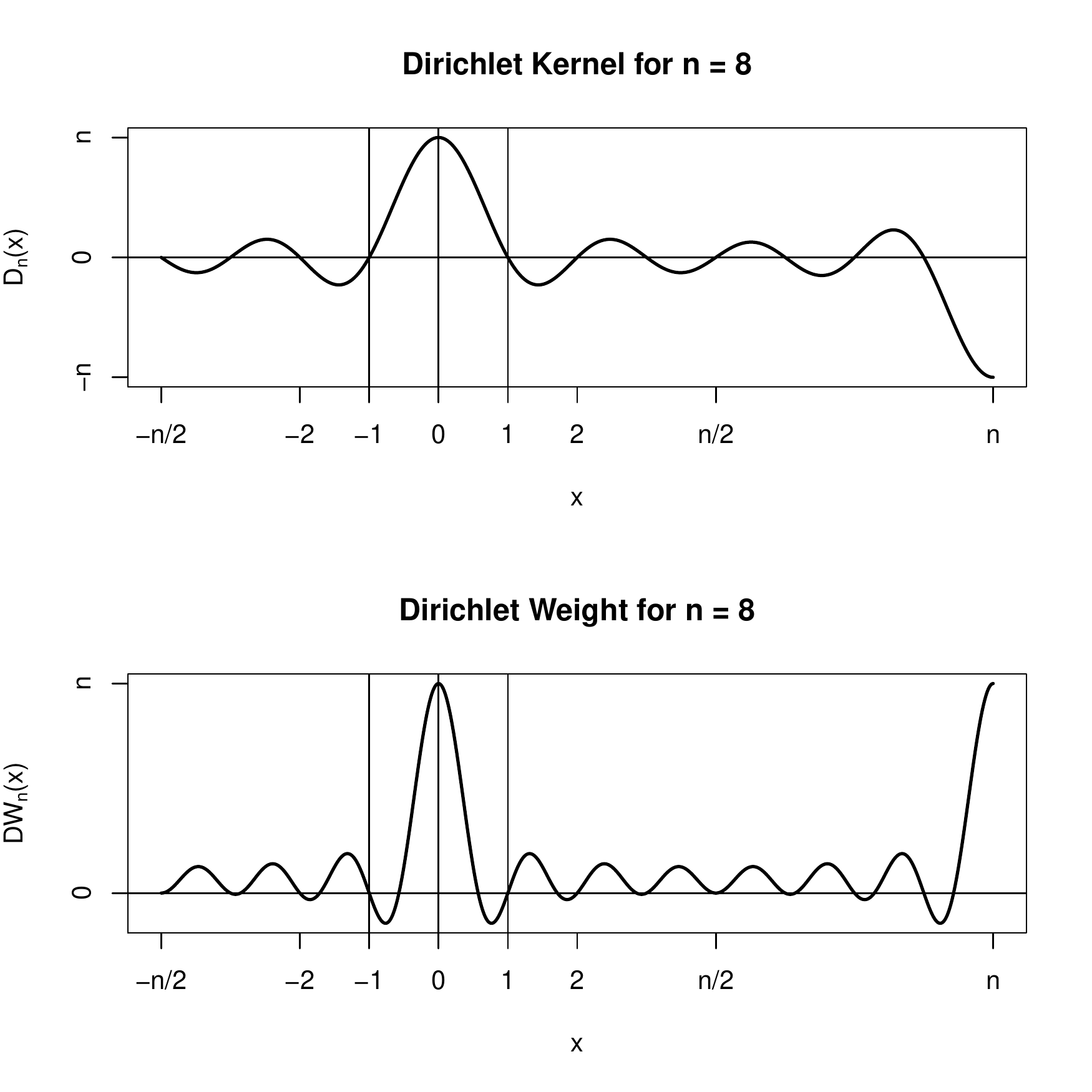} 
	\caption{The top panel shows the Dirichlet kernel $D_n(x)$ for $n = 8$, and the bottom panel shows the Dirichlet kernel multiplied by a complex constant, called the Dirichlet weight ($DW_n(x)$), for $n = 8$. The two key features are that both $D_n(x)$ and $DW_n(x)$ peak at $n$ when $x = 0$, and equal $0$ for all integer values of $x$. The shape of a Dirichlet kernel is remarkably similar to the $sinc(x)$ function (\cite{weisstein2002sinc}), which is sometimes used to approximate it.}
	\label{fig:dirichlet}	
\end{figure}

\subsection{Ringing and Trapezoids}
\label{sec:ringing-localsignals}
The previous sections focused on global periodic signals ($S = 0, L = N$) in order to simplify our explanation of Aliasing and Leakage. But the purpose of the SWDFT is analyzing {\bf local} periodic signals. Otherwise, we would just use the DFT. Therefore, this section assumes that $S > 0$ and $L < N$, which implies that $x_t$ is a local periodic signal. 

Following the definition in Section \ref{sec:onelocaldef}, local periodic signals oscillate for an interval of time, and are zero-valued for the rest of the signal. We refer to this as the ``oscillating part'' and ``zero part'' of the local periodic signal, respectively. Since the SWDFT takes a DFT in each window position, there are inevitably {\it some} window positions that cover both the oscillating and zero parts of the signal. The window positions that cover both parts lead to a phenomena called ``ringing''. 

We refer to ringing as the undesired detection of frequencies due to sharp transitions in the input signal. Another common definition of ringing is when the DFT is applied to a discontinuous functions (\cite{okamura2011short}). Ringing is closely related to the ``Gibbs Phenomena'', which is the fact that a Fourier series expansion can not exactly recover a function with a jump discontinuity. For local periodic signals, the ``sharp transition'' occurs when the signal changes from the zero part to the oscillating part, or vice versa. With our definition, ringing still occurs even if the local periodic signal is continuous, since the frequency profile differs on the zero and oscillating parts of the signal.  

The step-function provides a simple demonstration of ringing in the SWDFT due to a sharp transition in the input signal. Following \cite{okamura2011short}, define the step-function as:

\[
  s_t =
  \begin{cases}
    1 & t \geq d \\
    0 & \text{ otherwise} \\
  \end{cases}
\]

\noindent The SWDFT of $s_t$ for length $n$ windows is:

\begin{eqnarray}
	\label{eq:swdftstep}
	a_{k, p} &=& \frac{1}{\sqrt{n}} \sum_{j=0}^{n - 1} s_{p - n + 1 + j} \omega_n^{-jk}
\end{eqnarray}

The SWDFT coefficients of the step-function (Equation \ref{eq:swdftstep}) depend 
on whether the window position occurs before, during, or after the ``step'' in $s_t$ from 0 to 1 at time $d$. If the window position occurs before the jump ($p < d$), then all SWDFT coefficients are zero. When the window position is after the jump ($p > d + n - 1$), the coefficients are:

\[
  a_{k, p} =
  \begin{cases}
    \sqrt{n} & k = 0 \\
    0 & \text{otherwise} \\
  \end{cases}
\]

Ringing occurs when the window position contains the ``step'': $d \leq p \leq d + n - 1$. In this case, the coefficients are:

\[
	a_{k, p} = 
	\begin{cases}
		\frac{p - d + 1}{\sqrt{n}} & k = 0\\
		\frac{\omega_n^{-k(d - p + n - 1)}}{\sqrt{n}} \cdot \frac{(1 - \omega_n^{-k(p - d + 1)})}{(1 - \omega_n^{-k})}  & \text{otherwise}
	\end{cases}
\]

There is ringing here because all coefficients are nonzero, even though there is no periodic component in the input signal. The sharp transition from $0$ to $1$ at time $d$ causes the ringing. 

The SWDFT of local periodic signals has ringing for the same reason the step-function has ringing: some window positions are on both the zero and oscillating parts of the signal. For these window positions, the SWDFT coefficients are nonzero at all frequencies, even when the signal has no leakage. We now give a precise description of ringing for local periodic signals. 

\begin{figure}[ht]
	\centering
	\includegraphics[width = 14cm]{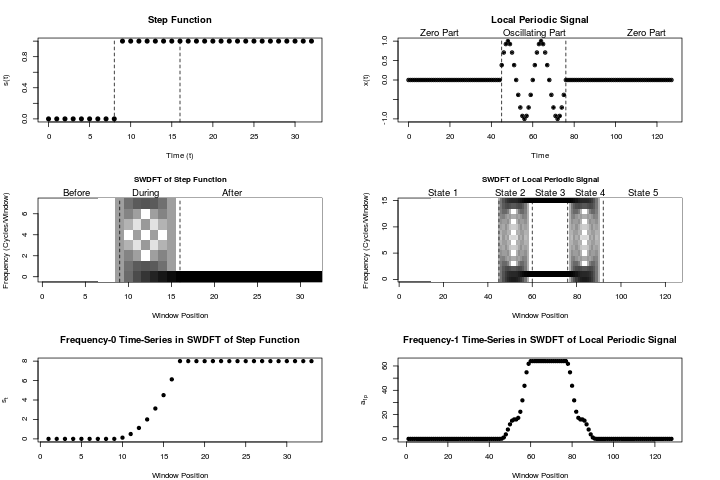} 
	\caption{Demonstration of the ringing phenomenon in the SWDFT for both a step-function (left panels) and a local periodic signal (right panels). The SWDFT of the step-function is broken into three states, depending on the window position: Before, During, and After the step from $0$ to $1$ at time $d$. Ringing occurs in the ``During'' state, that is, when the window position covers the step. The bottom-left panel shows the frequency-$0$ time-series ($|a_{0, }|^2$) for all window positions in the SWDFT of the step-function. For local periodic signals, ringing occurs in States 2 and 4, when the window position covers both the zero and oscillating parts of the input signal. The bottom-right panel shows the frequency-$1$ time-series $|a_{k, .}|^2$, which displays the trapezoid-like shape of Figure \ref{fig:canonical}. We use the log of the squared modulus coefficients in both SWDFT plots $\log(|a_{k, p}|^2)$, to emphasize the fact that the SWDFT coefficients are nonzero.}
	\label{fig:stepswdft}	
\end{figure}

For the sake of exposition, we assume that the local signal starts sufficiently far from the beginning of the time-series ($S > 2n$) and that the signal length is larger than the window size $L > n$, although these assumptions are not critical. Whereas the window position could be in three different states (before, during, after) for the step-function, the window position can be in {\bf five different states} for a local periodic signal:

\begin{itemize}
	\item State 1: Before the signal enters the window; 
	\item State 2: The first $q$ window points are on the local signal;
	\item State 3: The window is only on the local signal;
	\item State 4: The last $q$ points are on the local signal;
	\item State 5: After the signal passed the window;
\end{itemize}

States 1 and 5 are zero for all frequencies, and state 3 is identical to the SWDFT for global periodic signals (see Section \ref{sec:globalperiodic}). Ringing occurs in states 2 and 4, when the window position is on both the zero part and oscillating part of the local periodic signal. The exact SWDFT coefficients in each state are:

\begin{eqnarray}
\footnotesize	
  a_{k, p} =
  \begin{cases}
 	0 & \text{State 1} \\
  	\frac{A}{\sqrt{2n}}[e^{i \phi} \omega_n^{f \hat{p}} e^{-i(\frac{-\pi(2n - q - 1)(k - l)}{n})} D_q(\frac{2 \pi (k - f)}{n}) + e^{-i \phi} \omega_n^{-f\hat{p}} e^{-i(\frac{-\pi(2n - q - 1)(k + f)}{n})} D_q(\frac{2 \pi (k + f)}{n})] & \text{State 2} \\
  	\frac{A}{\sqrt{2n}}[e^{i \phi} \omega_n^{f \hat{p}} e^{-i(\frac{n - 1}{n})(k - f)} D_n(\frac{2 \pi (k - f)}{n}) + e^{-i \phi} \omega_n^{-f\hat{p}} e^{-i(\frac{n - 1}{n})(k + f)} D_n(\frac{2 \pi (k + f)}{n})] & \text{State 3} \\
  	\frac{A}{\sqrt{2n}}[e^{i \phi} \omega_n^{f \hat{p}} e^{-i(\frac{-\pi(q - 1)(k - f)}{n})} D_q(\frac{2 \pi (k - f)}{n}) + e^{-i \phi} \omega_n^{-f\hat{p}} e^{-i(\frac{-\pi(q - 1)(k + f)}{n})} D_q(\frac{2 \pi (k + f)}{n})] & \text{State 4} \\
  	0 & \text{State 5} 
  \end{cases} 
    \label{eq:swft-coef-local-signal} 
\end{eqnarray}

\noindent where $k = 0, 1, \ldots, n - 1$ and $p = n - 1, n, \ldots, N - 1$. 

States $2$ and $4$ of Equation \ref{eq:swft-coef-local-signal} give mathematical explanations for ringing in local periodic signals. In these states, the Dirichlet kernel changes from $D_n(x)$ in global periodic signals (and state 3) to $D_q(x)$ in states 2 and 4. And since since $D_n(x)$ peaks at $n$ (see Section \ref{sec:globalperiodic}), this implies that $D_q(x)$ peaks at $q \leq n$, which explains why the SWDFT coefficient in states 2 and 4 are {\bf smaller} than state 3. Another important point regarding states 2 and 4 is that when $k = f$ (e.g. there is no leakage), the SWDFT coefficients are still nonzero. To see why, consider the Dirichlet kernel in states 2 and 4:

\vspace{-1em}
\begin{eqnarray}
	D_q(\frac{2 \pi x}{n}) &=& \frac{\sin(\frac{q \pi x}{n})}{\sin(\frac{\pi x}{n})}\nonumber 
\end{eqnarray}
\vspace{-1em}

\noindent which means we need $\frac{qx}{n}$ to be an integer for these coefficients to vanish. The right panels of Figure \ref{fig:stepswdft} illustrate the SWDFT coefficients in the five different states.

\begin{figure}[ht]
	\centering
	\includegraphics[width = 12cm]{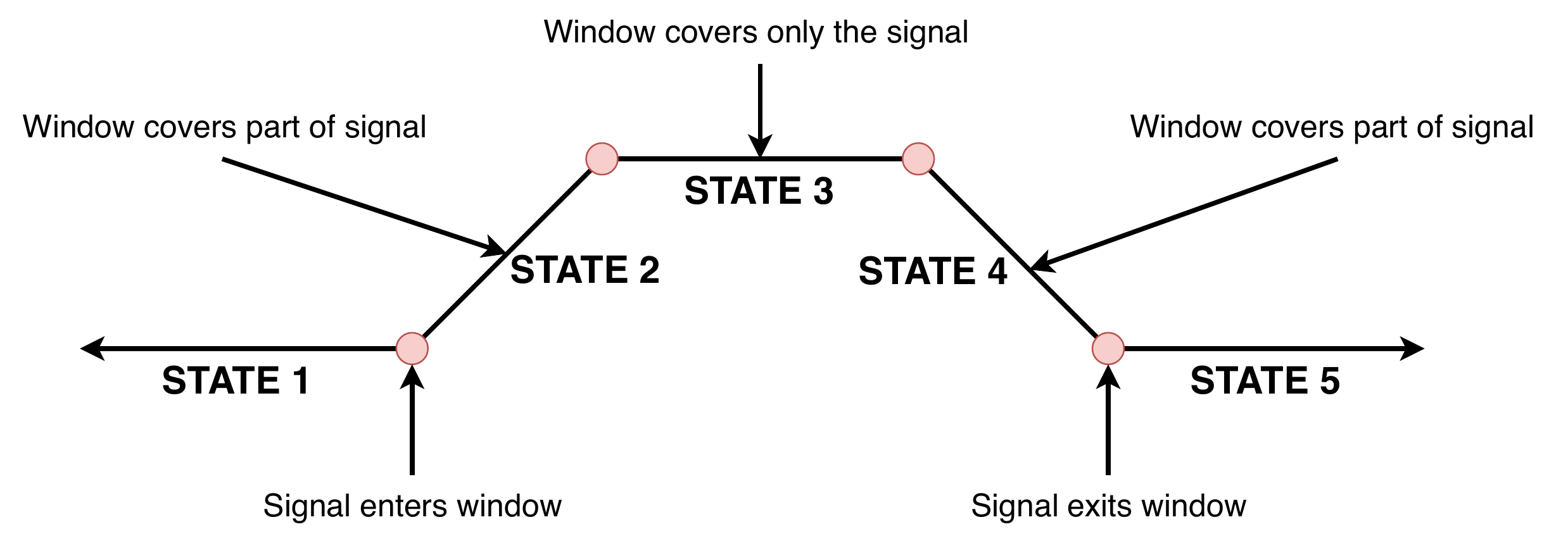} 
	\caption{The shape of a squared modulus frequency time-series $|a_{k, .}|^2$ corresponding to the true frequency of a local periodic signal.}
	\label{fig:canonical}	
\end{figure}

We conclude with a final point regarding the shape of the frequency time-series corresponding to the true frequency component of a local periodic signal. Figure \ref{fig:canonical} caricatures this shape, which looks more or less like a trapezoid. It is a caricature, however, because in practice we do not get straight lines in states 2 and 4 (the lines oscillate just like for global signals, but also increase/decrease as points enter/exit the window, see bottom-right panel of Figure \ref{fig:stepswdft}). 

In sum, local periodic signals have two states: the zero part and the oscillating part. Ringing, defined as the undesired detection of frequencies due to sharp transitions in the input signal, occurs at window positions that are on both the zero and oscillating parts of the signal simultaneously. 

\section{R Local Signals}
\label{sec:rsignals}
Now that we understand the SWDFT for one local periodic signal, we briefly generalize to $R$ local periodic signals. They key idea is that the SWDFT of $R$ local signals is the sum of the SWDFT for each local signal:

\begin{theorem}[$R$ Local Signals]
	\label{thm:rlocalsignals}
	Define $R$ length $N$ local periodic signals:

	\begin{eqnarray}
		{\bf x}_r &=& [x_{0, r}, x_{1, r}, \ldots, x_{N - 1, r}] \nonumber \\
		r &=& 1, 2, \ldots, R \nonumber \\
		t &=& 0, 1, \ldots, N - 1 \nonumber 
	\end{eqnarray}

	\noindent where

	\begin{eqnarray}
		x_{t, r} &=& A_r \cos(\frac{2 \pi t F_r}{N} + \phi_r) \cdot \mathbb{1}_{S_r, S_r + L_r - 1}(t) \nonumber 
	\end{eqnarray}

	\noindent Let $y_t$ be the sum of $R$ local period signals:

	\begin{eqnarray}
		y_t &=& \sum_{r = 1}^{R} x_{t, r}
	\end{eqnarray}

	\noindent Let $b_{k, p, r}$ be the SWDFT for $x_{t, r}$ for window size $n$. Then the SWDFT of $y_t$ is:

	\begin{eqnarray}
		a_{k, p} &=& \sum_{r = 1}^{R} b_{k, p, r} \nonumber \\
		k &=& 0, 1, \ldots n - 1 \nonumber \\
		p &=& n - 1, n, \ldots, N - 1 \nonumber 
	\end{eqnarray}

\end{theorem}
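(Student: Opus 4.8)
The plan is to prove this directly from the definition of the SWDFT, since the statement is simply the linearity of the windowed Fourier transform made explicit. The entire argument rests on one observation: for fixed frequency index $k$, window position $p$, and window size $n$, the coefficient $a_{k,p}$ is a \emph{finite} linear combination of the values of the input signal, so the map ``signal $\mapsto$ coefficient'' commutes with finite sums. No convergence or analytic machinery is needed.

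First I would substitute $y_t = \sum_{r=1}^{R} x_{t,r}$ directly into the defining Equation \ref{eq:swft}, writing
\begin{eqnarray}
	a_{k,p} &=& \frac{1}{\sqrt{n}} \sum_{j=0}^{n-1} y_{\hat{p} + j} \, \omega_n^{-jk} \nonumber
\end{eqnarray}
with $\hat{p} = p - n + 1$ as in Section \ref{sec:onelocaldef}, and then expand each term $y_{\hat{p}+j}$ into $\sum_{r=1}^{R} x_{\hat{p}+j,\,r}$ inside the window sum. The next step is to interchange the two finite summations, the inner one over $j \in \{0,\dots,n-1\}$ and the new one over $r \in \{1,\dots,R\}$. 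Because both index sets are finite, this exchange is unconditionally valid by the commutativity and associativity of addition; nothing more needs to be justified. After the swap, the sum over $j$ for each fixed $r$ is, term for term, the definition of $b_{k,p,r}$, the SWDFT of $x_{t,r}$, which yields $a_{k,p} = \sum_{r=1}^{R} b_{k,p,r}$ as claimed.

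I do not expect a genuine obstacle here; the content of the theorem is that linearity survives the restriction to a sliding window, and it does so trivially because the window indicator and the twiddle factors $\omega_n^{-jk}$ are identical across all $r$. The only points worth stating carefully, to keep the bookkeeping honest, are that the normalization $\frac{1}{\sqrt{n}}$ and the factors $\omega_n^{-jk}$ do not depend on $r$ and so pass through the $r$-sum untouched, and that each $b_{k,p,r}$ is computed with the \emph{same} window size $n$ and the same indices $k,p$ as $a_{k,p}$. With those conventions fixed, the equality holds for every $k = 0,1,\dots,n-1$ and every $p = n-1,n,\dots,N-1$, completing the proof.
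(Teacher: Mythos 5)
Your proposal is correct and follows exactly the same route as the paper's proof: substitute $y_t$ into the SWDFT definition, exchange the two finite sums, and identify each inner sum as $b_{k,p,r}$. The additional remarks about the normalization and twiddle factors being independent of $r$ are accurate but implicit in the paper's version.
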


\begin{proof}
	We prove Theorem \ref{thm:rlocalsignals} by substituting $x_{t, r}$ into Equation \ref{eq:swft} (the SWDFT definition):

	\begin{eqnarray}
		a_{k, p} &=& \frac{1}{\sqrt{n}} \sum_{j = 0}^{n - 1} y_{p - n + 1 + j} \omega_n^{-jk} \nonumber \\
		&=& \frac{1}{\sqrt{n}} \sum_{j = 0}^{n - 1} (\sum_{r = 1}^{R} x_{p - n + 1 + j, r}) \omega_n^{-jk} \nonumber \\
		&=& \sum_{r = 1}^{R} \frac{1}{\sqrt{n}} \sum_{j = 0}^{n - 1} x_{p - n + 1 + j, r} \omega_n^{-jk} \nonumber \\
		&=& \sum_{r = 1}^{R} b_{k, p, r}
	\end{eqnarray}

\end{proof}

Figure \ref{fig:two-signals} gives two simple examples of Theorem \ref{thm:rlocalsignals}: when the signals overlap, and when they do not. The SWDFT of the two signals is clearly additive. 

\begin{figure}[ht]
	\centering
	\includegraphics[width = 16cm]{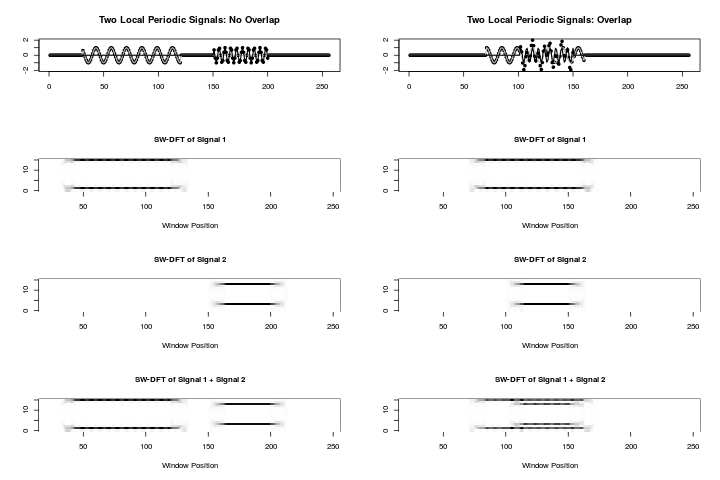} 
	\caption{Comparison of two signals with two periodic components each. The four panels on the left correspond to a signal where the two periodic components {\bf do not} overlap in time. The top-left plot shows the two signals. The gray line is the first periodic component, the black line is the second periodic component, and the black dots are the sum of both periodic components. The second and third plots on the left-side show SWDFT coefficients for each periodic component separately, and the bottom-left plot shows the SWDFT coefficients for the sum of the periodic components. The four right-side plots are the same as the left-side plots, except now the two periodic components overlap.}
	\label{fig:two-signals}	
\end{figure}

\section{Estimation}
\label{sec:estimation}
The previous sections studied SWDFT output under known, local periodic input signals. This section reverses this, and studies estimation of local periodic signals from the SWDFT. Specifically, this section addresses the following question: assuming our time-series is the sum of $R$ local periodic signals, how do we estimate the parameters? Like Section \ref{sec:onesignal}, we start with the noiseless case, then add noise in the final section. Our model is:

\begin{eqnarray}
	\label{eq:rlocalsignals}
	y_t &=& \sum_{i = 1}^{R} A_r \cos(\frac{2 \pi t F_r}{N} + \phi_r) \mathbb{1}_{S_r, S_r + L_r - 1}(t) \nonumber \\
	&& t = 0, 1, \ldots, N - 1
\end{eqnarray}

\noindent Equation \ref{eq:rlocalsignals} has $5R + 1$ parameters, since each local periodic signal has five parameters: ($A_r, F_r, \phi_r, S_r, L_r$), and we need to estimate $R$. The SWDFT of Equation \ref{eq:rlocalsignals} with length $n$ windows is:

\begin{eqnarray}
	\label{eq:rlocalswft}
	a_{k, p} &=& \frac{1}{\sqrt{n}} \sum_{j = 0}^{n - 1} \sum_{r = 1}^{R} A_r \cos(\frac{2 \pi t F_r}{N} + \phi_r) \mathbb{1}_{S_r, S_r + L_r - 1} \omega_n^{-jk}
\end{eqnarray}

Equation \ref{eq:rlocalswft} is our model for the SWDFT of $R$ local periodic signals. Next, we define the data. Let ${\bf x} = [x_0, x_1, \ldots, x_{N - 1}]$ be a length $N$ signal, and let ${\bf b}$ be the SWDFT of {\bf x} with length $n$ windows:

\begin{eqnarray}
	b_{k, p} &=& \frac{1}{\sqrt{n}} \sum_{j = 0}^{n - 1} x_{p - n + 1 + j} \omega_n^{-jk} \nonumber \\
	&& k = 0, 1, \ldots, n - 1 \nonumber \\
	&& p = n - 1, n, \ldots, N - 1 
\end{eqnarray}	

\noindent We want the least-squares parameter estimates of $a_{k, p}$:

\begin{eqnarray}
	\argmin_{R, S_r, L_r, A_r, F_r, \phi_r; \atop r = 1, \ldots R} && \sum_{k = 0}^{n - 1} \sum_{p = n - 1}^{N - 1} (b_{k, p} - a_{k, p})^2
\end{eqnarray}

\noindent We focus on the case where $R = 1$, although the concepts extend to $R > 1$. When $R > 1$, however, the computational complexity rapidly increases (in proportion to the total number of parameters), which implies that the success of our estimation procedure largely depends on optimization methods and computational implementation. We leave a complete estimation procedure to the future. When $R = 1$, we need to estimate five parameters of $a_{k, p}$:

\begin{eqnarray}
	\argmin_{S, L, F, A, \phi} && \sum_{k = 0}^{n - 1} \sum_{p = n - 1}^{N - 1} (b_{k, p} - a_{k, p})^2
\end{eqnarray}

\noindent Our SWDFT model for $R = 1$ is:

\begin{eqnarray}
	a_{k, p} &=& \frac{1}{\sqrt{n}} \sum_{j = 0}^{n - 1} A \cos(\frac{2 \pi (\hat{p} + j) F}{N} + \phi) \mathbb{1}_{S, S + L - 1}(\hat{p} + j) \omega_n^{-jk}
\end{eqnarray}

\subsection{Linearizing The Model}
\label{sec:linearizing}
The key estimation idea is that if we fix $S, L$, and $F$, we can linearize the model and efficiently estimate $A$ and $\phi$. Based on a trick from Section 2.1 of \cite{bloomfield2004fourier}, re-write the cosine factor using the following trigonometric identity:

\begin{eqnarray}
	A \cos(\frac{2 \pi (\hat{p} + j) F}{N} + \phi) &=& \beta_1 \cos(\frac{2 \pi (\hat{p} + j) F}{N}) + \beta_2 \sin(\frac{2 \pi (\hat{p} + j) F}{N})	
\end{eqnarray}

\noindent where 

\begin{eqnarray}
	\beta_1 &=& A \cos(2 \pi \phi) \nonumber \\
	\beta_2 &=& -A \sin(2 \pi \phi) \nonumber 	
\end{eqnarray}

\noindent And, if we know $\beta_1$ and $\beta_2$, there is a one-to-one mapping to $A$ and $\phi$:

\begin{eqnarray}
	A &=& \sqrt{\beta_1^2 + \beta_2^2} \nonumber \\
	\phi &=& \frac{\arctan(\frac{-\beta_2}{\beta_1})}{2 \pi} \nonumber 	
\end{eqnarray}

\noindent Define the complex-valued constants $C_{1, k, p}$ and $C_{2, k, p}$ as:

\begin{eqnarray}
	C_{1, k, p} &=& \frac{1}{\sqrt{n}} \sum_{j = 0}^{n - 1} \cos(\frac{2 \pi (\hat{p} + j)F}{N}) \mathbb{1}_{S, S + L - 1}(\hat{p} + j) \omega_n^{-jk}  \nonumber \\
	C_{2, k, p} &=& \frac{1}{\sqrt{n}} \sum_{j = 0}^{n - 1} \sin(\frac{2 \pi (\hat{p} + j)F}{N}) \mathbb{1}_{S, S + L - 1}(\hat{p} + j) \omega_n^{-jk} \nonumber 	
\end{eqnarray}

\noindent Then re-write our model as linear:

\begin{eqnarray}
	b_{k, p} &=& \beta_1 C_{1, k, p} + \beta_2 C_{2, k, p}
\end{eqnarray}

\noindent Now our optimization problem is:

\begin{eqnarray}
	\argmin_{S, L, F}( \argmin_{A, \phi} && \sum_{k = 0}^{n - 1} \sum_{p = n - 1}^{N - 1} (b_{k, p} - \beta_1 C_{1, k, p} - \beta_2 C_{2, k, p})^2)
\end{eqnarray}

\noindent And since the inner-most optimization is linear, we can solve it efficiently. One complication is that $b_{k, p}$, $C_{1, k, p}$, and $C_{2, k p}$ are complex numbers. Since the $\beta$'s are real-valued, we can write the model in complex-number notation:

\begin{eqnarray}	
	\label{eq:complexreg}
	[Re(b_{k, p}), Im(b_{k, p})] &=& [\beta_1 Re(C_{1, k, p}), \beta_1 Im(C_{1, k, p})] + [\beta_2 Re(C_{2, k, p}), \beta_2 Im(C_{2, k, p})] \nonumber \\
\end{eqnarray}	 

\noindent Then by the definition of complex addition, we can solve for the $\beta$'s using the real parts of Equation \ref{eq:complexreg}:

\begin{eqnarray}
	Re(b_{k, p}) &=& \beta_1 Re(C_{1, k, p}) + \beta_2 Re(C_{2, k, p}) \nonumber 
\end{eqnarray}

\subsection{Optimizing $F$ and Selecting $k$}
\label{sec:optfk}
Linearizing the model means that if we are given $S$, $L$, and $F$, we can efficiently estimate $A$ and $\phi$. However, we still require estimates for $S$, $L$, and $F$. Since $S$ and $L$ are integers, we can solve for them using a grid search, and if this is too slow, a randomized search (\cite{bergstra2012random}). The only missing piece is $F$. 

To solve for $F$, we don't need to optimize over all the SWDFT coefficients. If a local periodic component is present, there would be large SWDFT coefficients in the corresponding frequency time-series (e.g. $a_{k, .}$). Therefore, we can restrict our optimization to only these coefficients:

\begin{eqnarray}
	\argmin_{S, L, F}( \argmin_{A, \phi} && \sum_{p = n - 1}^{N - 1} (b_{k, p} - \beta_1 C_{1, k, p} - \beta_2 C_{2, k, p})^2)
\end{eqnarray}

\noindent Which leads to the question, which $k$ should we consider?

\begin{enumerate}
	\item Select the $k$ with the largest squared modulus coefficient ($\max(|a_{k, p}|^2)$). 
	\item Select the $k$ that gives the largest {\bf reduction} in Mean Squared Error (MSE).
\end{enumerate}

Option 1 is fast and simple. If the signal has local periodic components, they would manifest as large squared modulus SWDFT coefficients, and the largest coefficient provides the strongest evidence of a local periodic signal. The potential for error in Option 1 occurs if we pick the wrong frequency, due to random noise at some other frequency. This error becomes more probable when the amplitude is small compared with the noise level. Of course, we could stabilize these estimates by selecting the $k$ with the largest, say 5, consecutive squared modulus coefficients (see Chapter 7 of \cite{okamura2011short}).
 
Option 2 searches {\bf all} combinations of Fourier frequencies ($k = 0, 1, \ldots n - 1$) and grid positions ($S$ and $L$), and selects the Fourier frequency $k$ with the best fit. By ``best fit'', we mean the Fourier frequency that leads to the largest reduction in MSE, compared with just fitting the mean. To clarify this, let $\text{MSE}_A$ be the mean-squared-error of simply fitting the mean, and let $\text{MSE}_B$ be the mean-squared-error of fitting our local signal model for parameters $S, L$, and $k$, and let $\text{MSE}_C$ be the change in MSE:

\begin{eqnarray}
	\text{MSE}_{A} &=& \sum_{p = n - 1}^{N - 1} (b_{k, p} - \bar{b}_{k, p})^2 \nonumber \\
	\text{MSE}_{B} &=& \sum_{p = n - 1}^{N - 1} (b_{k, p} - \hat{\beta}_1 C_{1, k, p} - \hat{\beta}_2 C_{2, k, p})^2\nonumber \\
	\text{MSE}_{C} &=& \text{MSE}_{B} - \text{MSE}_{A}
\end{eqnarray}

Option 2 minimizes $\text{MSE}_C$ opposed to just $\text{MSE}_B$ because $\text{MSE}_B$ may simply be small because the particular frequency time-series $a_{k, .}$ has small values. In this case, $\text{MSE}_B$ would be small even though these is no signal at this frequency. Selecting the frequency that minimizes $\text{MSE}_C$ helps alleviate this scaling issue. While option $2$ is more complete, it is also more computationally demanding, and doesn't necessarily optimize parameters other than $F$. However, the idea is useful, especially if we restricted possible frequencies to only the Fourier frequencies, which is common for DFT analysis. In this case, the ideas underlying Option $2$ would be useful. 

Either way, assuming we chose frequency $k^*$ correctly, the true frequency $f$ is {\bf closer} to frequency $k^*$ than any other $k \neq k^* = 0, 1, \ldots, n - 1$. This means we can restrict our search for $f$ to $[k^* - \frac{1}{2}, k^* + \frac{1}{2}]$, and solve numerically. The steps of our estimation procedure are summarized in Algorithm \ref{alg::estimation}.  

\begin{algorithm}
\caption{Estimate Parameters for One Local Periodic Signal from the SWDFT} 
\label{alg::estimation}
\SetAlgoLined
{\bf input: } {\bf b}, the SWDFT of a time-series \\
  	\vspace{.2em}
  - Select frequency $k$ time-series to search \\
  	\vspace{.2em}
  \For{Discrete Search over $S$ and $L$} {
  		\vspace{.5em}
  	\For{f in $[k - \frac{1}{2}, k + \frac{1}{2}]$} {
  		\vspace{.5em}
	  -	Create $C_{1, k, p}$ and $C_{2, k, p}$ \\
	  	\vspace{.2em}
	  - Estimate $\hat{\beta}_1$ $\hat{\beta}_2$ \\
	  	\vspace{.2em}
	  - Convert $\hat{\beta}_1, \hat{\beta}_2$ to $\hat{A}, \hat{\phi}$ \\
  	}	
  }
  {\bf output: } Parameters $\hat{S}, \hat{L}, \hat{A}, \hat{F}$, and $\hat{\phi}$ with lowest MSE \\
\end{algorithm}

\subsection{Estimation with Noise}
\label{sec:estimationplusnoise}
Until now, we only considered noiseless signals, since this helped us explain key features of the SWDFT. For practical time-series, we need to deal with noise. This section presents a very small and exploratory simulation study. Our goal is a better understanding of how noise effects the accuracy of our estimation procedure. We use the following model:

\begin{eqnarray}		
	y_t &=& A \cos(\frac{2 \pi F t}{N} + \phi) \mathbb{1}_{S, S + L - 1}  + \epsilon_t \nonumber \\
	&& t = 0, 1, \ldots, N - 1
\end{eqnarray}

\noindent For simplicity, assume the errors ${\bf \epsilon} = [\epsilon_1, \ldots, \epsilon_{N - 1}]$ are iid and $\epsilon_t \sim N(0, \sigma^2)$. 

Similar to the simulations in \cite{siegel1980testing}, we use the fact that signals with larger amplitudes are easier to detect. So, to determine how noise effects our estimation accuracy, the most important quantity is what we call the ``amplitude-to-noise'' ratio:

\begin{eqnarray}
	\text{Amplitude-to-Noise-Ratio} &=& \frac{A}{\sigma}
\end{eqnarray}

\noindent Our simulations hold amplitude ($A$) fixed and vary $\sigma$. In addition, our simulations include the effect of leakage and window size. We vary the following parameters in our simulations:

\begin{itemize}
	\item $n = 8, 16, 32$ 
	\item $\sigma = 0, .5, 1, 1.5, 2$ 
	\item $F = 8, 11$. 
\end{itemize}

\noindent And we keep the remaining parameters fixed:

\begin{itemize}
	\item $A = 1$
	\item $N = 64$
	\item $S = 17$
	\item $L = 31$
	\item $\phi = 1$
\end{itemize}

\noindent We fix $S$ and $L$ because, unless the local signal is near one of the ends of the time-series, the exact numerical positions of $S$ and $L$ are not critical. Finally, we fix $\phi$ because it wasn't the focus of this study (although we still want to estimate it well!). 

With this configuration, our estimation procedure takes approximately $30$ seconds to run for a single dataset. We ran each set of parameters $25$ times. The three tables in the appendix show the MSE of our parameter estimates. We conclude this section with a few key observations from the study. 

Figure \ref{fig:selectingk} shows the proportion of simulations where we estimated the frequency time-series to search correctly. As expected, when the signal gets noisier, we are less likely to search the correct SWDFT time-series. This is critical, because if we aren't searching the correct frequency, the parameter estimates are just fit to noise. 

\begin{figure}[ht]
	\centering
	\includegraphics[width = 14cm]{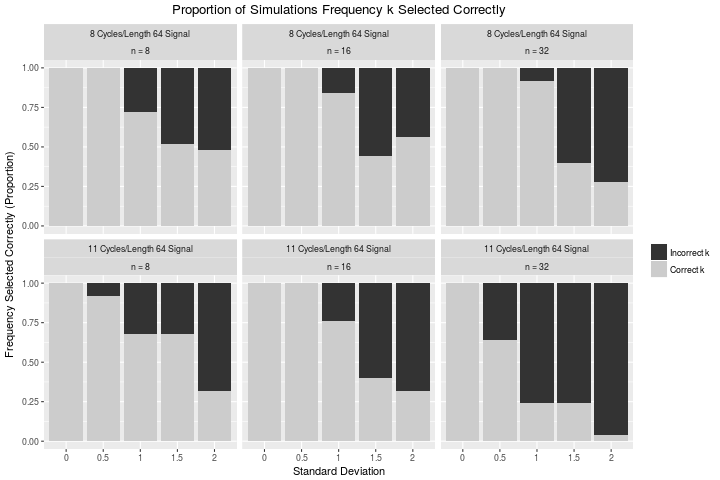} 
	\caption{Proportion of our cases where our estimation procedure selected the correct frequency. The top three panels have window sizes $n = 8, 16$, and $32$, and are based on signals with no leakage ($8$ complete cycles in a length $64$ time-series). The bottom three panels are the same window sizes as the top, but this time correspond to $11$ complete cycles in a length $64$ time-series. The x-axis is the five different values of $\sigma$ we used in our simulations.}
	\label{fig:selectingk}
\end{figure}

Figure \ref{fig:ahat} shows our estimates of the amplitude parameter ($\hat{A}$). Our estimation clearly gets worse as noise is added. However, an interesting point is that our amplitude estimates are biased upwards, since high noise leads to larger values in the original signal.

\begin{figure}[ht]
	\centering
	\includegraphics[width = 12cm]{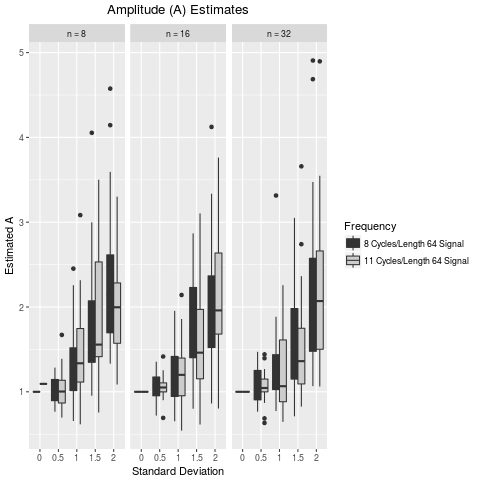} 
	\caption{Estimates of amplitude $\hat{A}$ parameter from our simulation study. The three panels correspond to window sizes $n = 8, 16$, and $32$, the black box-plots correspond to frequencies with no leakage ($8$ complete cycles in length $64$ time-series), and the gray box-plots correspond to frequencies with leakage ($11$ complete cycles).}
	\label{fig:ahat}
\end{figure}

\section{Discussion}
\label{sec:discussion}
This paper introduced the SWDFT as a tool for analyzing local periodic signals. We proposed a $5$-parameter model for a noiseless local periodic signal, and studied the SWDFT under different parameterizations of the model. Then we showed how the ideas underlying the study extend to $R > 1$ local periodic signals. Next, we proposed a simple estimation procedure for the least-squares parameters of the model using the SWDFT. We concluded with a very small Monte Carlo simulation study of the procedure's accuracy under different noise levels. 

The SWDFT is also useful for Exploratory Data Analysis (EDA). To see this, consider the two examples shown in Figure \ref{fig:sunspot_lynx}. The left panels show the time-series and corresponding SWDFT of Canadian Lynx trappings (\cite{campbell1977survey}), and the right panels show the same for annual sunspot numbers collected at the Swiss Federal Observatory from 1700-1988. Both data-sets are included by default in the {\tt R} language distribution, and can be accessed by {\tt data("lynx")} and {\tt data("sunspot.year")}. The SWDFT of the Canadian Lynx data shows consistently large energy at frequency time-series $3$ ($|a_{3, .}|^2$), which corresponds to a cycle every $\frac{32}{3} \approx 10.6$ years, since we use a window size of $n = 32$. The consistency of this cycle suggests this 10 year oscillation is stationary. The sunspot data is more elusive, but we see the largest consistent cycle at frequency time-series $6$, which also corresponds to $\frac{64}{6} \approx 10.6$ years, which suggests the well-known 11-year cycle. However, for the window positions indexed by years, 1800-1850, this frequency time-series gets smaller, and frequency time-series $1$, corresponding to $64$ years, gets larger. This suggests that longer term periodicity exists in the sunspot series, such as the cycles discovered by \cite{ohtomo1994new}. More complete analysis of these two time-series can be found at \cite{campbell1977survey,stenseth1997population,ohtomo1994new}. 

\begin{figure}[ht]
	\centering
	\includegraphics[width = 16cm]{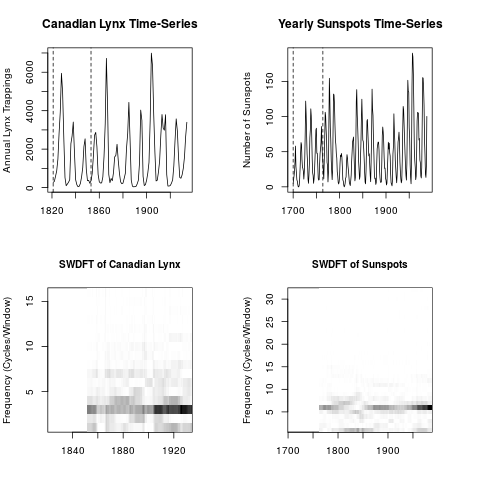} 
	\caption{Two examples of the SWDFT for Exploratory Data Analysis (EDA). Top Left: Time-series of Annual Canadian Lynx trappings. The two dashed lines show the input data for the first length $n = 32$ window position. Bottom Left: SWDFT of Canadian Lynx time-series for window size $n = 32$. The SWDFT shows the squared modulus SWDFT coefficients: $|a_{k, p}|^2$, $k=1,\ldots,16$.  Top right: Time-Series of yearly sunspot numbers from 1700-1988 at the Swiss Federal Observatory. The two dashed lines show the size of the first length $n = 64$ window position. Bottom Right: SWDFT of Yearly Sunspots time-series with window size $n = 64$.}
	\label{fig:sunspot_lynx}
\end{figure}

All computations in the paper used the \texttt{R} package {\tt swdft}, available online at \url{https://github.com/leerichardson/swdft}.

\clearpage 
\bibliographystyle{apa}
\bibliography{references}

\clearpage 	
\appendix
\section{Details on Estimation Procedure used in Simulation Study}
We include a few minor details important to our estimation and simulations here. 

\begin{itemize}
	\item The first issue comes from searching frequency time-series $a_{\frac{n}{2}, .} = [a_{\frac{n}{2}, n - 1}, a_{\frac{n}{2}, n}, \ldots, a_{\frac{n}{2}, N - 1}]$. This frequency time-series is problematic because it rapidly oscillates back and forth, since $\cos(\frac{2 \pi (n / 2)t}{n}) = \cos(\pi t)$. For estimation purposes, this is an issue because either the real or imaginary values can {\bf all} be extremely small, which means estimates for $\hat{\beta}_1$ and $\hat{\beta}_2$ will be extremely large to compensate. These large coefficients correspond to large amplitude estimates, since our conversion is $\hat{A} = \sqrt{\beta_1^2 + \beta_2^2)}$. To give an idea of how problematic this was, some of our simulations estimated $\hat{A} > 1000$, even though the true amplitude was $A = 1.$

	We addressed this by not allowing our estimation procedure to search frequency $\frac{n}{2}$. In practice, if this is the frequency of interest, the window size should be increased. 
	\item To numerically solve for $f$, we use the \textit{optimize} R function, which uses ``a combination of golden section search and successive parabolic interpolation'' (\cite{rcoreteam}).
	\item We used Option 1 described in Section \ref{sec:optfk} to select which frequency-time series to search. 
	\item In Section \ref{sec:linearizing}, we said that $\hat{\beta}_1$ and $\hat{\beta}_2$ could be estimated using the real parts of the equation. Of course, they can also be estimated using the imaginary parts. In our simulation, we arbitrarily chose to use the real-parts of the equation.
	\item We restricted our search for the parameter $L$ to $L \geq 8$. While the value of $8$ is arbitrary, we did this because the smaller the signal we are searching for, the more likely it would appear simply as noise, and we found $8$ a good cut-off for avoiding this problem. 
\end{itemize}

\begin{table}[ht]
\begin{tabular}{l*{5}{c}}
	Estimates of $A$, $n = 8$ & \multicolumn{5}{c}{Standard Deviation ($\sigma$)} \\ 
	\cmidrule(lr){2-6} 
		Frequency & 0 & .5 & 1 & 1.5 & 2 \\
	\midrule 
		8 Cycles/Length 64 Signal & 0.00 & 0.02 & 0.28 & 1.20 & 2.39 \\ 
		11 Cycles/Length 64 Signal & 0.01 & 0.05 & 0.52 & 1.24 & 1.14 \\ 
	\bottomrule
\end{tabular}
\vspace{2em}

\begin{tabular}{l*{5}{c}}
	Estimates of $S$, $n = 8$ & \multicolumn{5}{c}{Standard Deviation ($\sigma$)} \\ 
	\cmidrule(lr){2-6} 
		Frequency & 0 & 1 & 2 & 4 & 8\\
	\midrule 
		8 Cycles/Length 64 Signal & 0.00 & 4.80 & 90.00 & 182.72 & 278.20 \\ 
		11 Cycles/Length 64 Signal & 9.00 & 22.84 & 112.16 & 214.48 & 157.36 \\ 
	\bottomrule 
\end{tabular}
\vspace{2em}

\begin{tabular}{l*{5}{c}}
	Estimates of $L$, $n = 8$ & \multicolumn{5}{c}{Standard Deviation ($\sigma$)} \\ 
	\cmidrule(lr){2-6} 
	Frequency & 0 & 1 & 2 & 4 & 8\\
	\midrule 
		8 Cycles/Length 64 Signal & 0.00 & 14.08 & 164.64 & 210.04 & 228.56 \\ 
		11 Cycles/Length 64 Signal & 25.00 & 59.84 & 161.88 & 250.96 & 165.12 \\ 
  	\bottomrule 
\end{tabular}
\vspace{2em}

\begin{tabular}{l*{5}{c}}
	Estimates of $f$, $n = 8$ & \multicolumn{5}{c}{Standard Deviation ($\sigma$)} \\ 
	\cmidrule(lr){2-6} 
	Frequency & 0 & 1 & 2 & 4 & 8\\
	\midrule 
		8 Cycles/Length 64 Signal & 0.00 & 0.00 & 0.75 & 1.20 & 1.45 \\ 
		11 Cycles/Length 64 Signal & 0.00 & 0.00 & 0.37 & 0.51 & 1.37 \\ 		
  	\bottomrule 
\end{tabular}
\vspace{2em}

\begin{tabular}{l*{5}{c}}
	Estimates of $\phi$, $n = 8$ & \multicolumn{5}{c}{Standard Deviation ($\sigma$)} \\ 
	\cmidrule(lr){2-6} 
	Frequency & 0 & 1 & 2 & 4 & 8\\
	\midrule 
		8 Cycles/Length 64 Signal & 0.00 & 1.25 & 3.00 & 7.96 & 7.69 \\ 
		11 Cycles/Length 64 Signal & 0.12 & 3.12 & 3.41 & 7.09 & 8.03 \\ 
  	\bottomrule 
\end{tabular}

\begin{tabular}{l*{5}{c}}
	Estimates of $k$, $n = 8$ & \multicolumn{5}{c}{Standard Deviation ($\sigma$)} \\ 
	\cmidrule(lr){2-6} 
	Frequency & 0 & 1 & 2 & 4 & 8\\
	\midrule 
		8 Cycles/Length 64 Signal & 1.00 & 1.00 & 0.72 & 0.52 & 0.48 \\ 
		11 Cycles/Length 64 Signal & 1.00 & 0.92 & 0.68 & 0.68 & 0.32 \\ 
  	\bottomrule 
\end{tabular}

\label{tab:results_n8}
\caption{These six tables show the accuracy of our estimation procedure in our simulation study when $n = 8$. The first $5$ tables give the Mean Squared Error (MSE) for the parameters $A$, $S$, $L$, $f$, and $\phi$, and the $6^{th}$ table gives the fraction of simulations out of 25 where we searched the correct frequency time-series.}
\end{table}

\begin{table}[ht]
\label{tab:results_n16}
\begin{tabular}{l*{5}{c}}
	Estimates of $A$, $n = 16$ & \multicolumn{5}{c}{Standard Deviation ($\sigma$)} \\ 
	\cmidrule(lr){2-6} 
	Frequency & 0 & 1 & 2 & 4 & 8\\
	\midrule 
		8 Cycles/Length 64 Signal & 0.00 & 0.03 & 0.13 & 0.94 & 1.54 \\ 
		11 Cycles/Length 64 Signal & 0.00 & 0.02 & 0.20 & 0.85 & 1.68 \\ 
	\bottomrule
\end{tabular}
\vspace{2em}

\begin{tabular}{l*{5}{c}}
	Estimates of $S$, $n = 16$ & \multicolumn{5}{c}{Standard Deviation ($\sigma$)} \\ 
	\cmidrule(lr){2-6} 
	Frequency & 0 & 1 & 2 & 4 & 8\\
	\midrule
		8 Cycles/Length 64 Signal & 0.00 & 3.76 & 56.20 & 151.72 & 136.24 \\ 
		11 Cycles/Length 64 Signal & 0.00 & 5.40 & 51.04 & 127.64 & 185.28 \\ 
	 \bottomrule 
\end{tabular}
\vspace{2em}

\begin{tabular}{l*{5}{c}}
	Estimates of $L$, $n = 16$ & \multicolumn{5}{c}{Standard Deviation ($\sigma$)} \\ 
	\cmidrule(lr){2-6} 
	Frequency & 0 & 1 & 2 & 4 & 8\\
	\midrule 
		8 Cycles/Length 64 Signal & 0.00 & 13.40 & 88.96 & 227.48 & 287.96 \\ 
		11 Cycles/Length 64 Signal & 0.00 & 12.12 & 160.36 & 218.48 & 152.00 \\ 
  	\bottomrule 
\end{tabular}
\vspace{2em}

\begin{tabular}{l*{5}{c}}
	Estimates of $f$, $n = 16$ & \multicolumn{5}{c}{Standard Deviation ($\sigma$)} \\ 
	\cmidrule(lr){2-6} 
	Frequency & 0 & 1 & 2 & 4 & 8\\
	\midrule 
		8 Cycles/Length 64 Signal & 0.00 & 0.00 & 1.73 & 6.36 & 2.69 \\ 
		11 Cycles/Length 64 Signal & 0.00 & 0.00 & 1.56 & 2.81 & 3.67 \\ 
  	\bottomrule 
\end{tabular}
\vspace{2em}

\begin{tabular}{l*{5}{c}}
	\toprule 
	Estimates of $\phi$, $n = 16$ & \multicolumn{5}{c}{Standard Deviation ($\sigma$)} \\ 
	\cmidrule(lr){2-6} 
	Frequency & 0 & 1 & 2 & 4 & 8\\
	\midrule 
		8 Cycles/Length 64 Signal & 0.00 & 1.21 & 6.28 & 10.06 & 8.02 \\ 
		11 Cycles/Length 64 Signal & 0.00 & 1.62 & 5.87 & 7.81 & 6.98 \\ 
  	\bottomrule 
\end{tabular}

\begin{tabular}{l*{5}{c}}
	Estimates of $k$, $n = 16$ & \multicolumn{5}{c}{Standard Deviation ($\sigma$)} \\ 
	\cmidrule(lr){2-6} 
	Frequency & 0 & 1 & 2 & 4 & 8\\
	\midrule 
		8 Cycles/Length 64 Signal & 1.00 & 1.00 & 0.84 & 0.44 & 0.56 \\ 
		11 Cycles/Length 64 Signal & 1.00 & 1.00 & 0.76 & 0.40 & 0.32 \\ 
  	\bottomrule 
\end{tabular}

\caption{These six tables show the accuracy of our estimation procedure in our simulation study when $n = 16$. The first $5$ tables give the Mean Squared Error (MSE) for the parameters $A$, $S$, $L$, $f$, and $\phi$, and the $6^{th}$ table gives the fraction of simulations out of 25 where we searched the correct frequency time-series.}
\end{table}

\begin{table}[ht]
\label{tab:results_n32}
\begin{tabular}{l*{5}{c}}
	Estimates of $A$, $n = 32$ & \multicolumn{5}{c}{Standard Deviation ($\sigma$)} \\ 
	\cmidrule(lr){2-6} 
	Frequency & 0 & 1 & 2 & 4 & 8\\
	\midrule 
		8 Cycles/Length 64 Signal & 0.00 & 0.05 & 0.36 & 0.67 & 2.66 \\ 
		11 Cycles/Length 64 Signal & 0.00 & 0.04 & 0.29 & 0.66 & 2.24 \\ 
	\bottomrule
\end{tabular}
\vspace{2em}

\begin{tabular}{l*{5}{c}}
	Estimates of $S$, $n = 32$ & \multicolumn{5}{c}{Standard Deviation ($\sigma$)} \\ 
	\cmidrule(lr){2-6} 
	Frequency & 0 & 1 & 2 & 4 & 8\\
	\midrule
		8 Cycles/Length 64 Signal & 0.00 & 15.16 & 59.48 & 111.08 & 114.36 \\ 
		11 Cycles/Length 64 Signal & 0.00 & 35.64 & 70.28 & 139.64 & 112.32 \\ 
	 \bottomrule
\end{tabular}
\vspace{2em}

\begin{tabular}{l*{5}{c}}
	Estimates of $L$, $n = 32$ & \multicolumn{5}{c}{Standard Deviation ($\sigma$)} \\ 
	\cmidrule(lr){2-6} 
	Frequency & 0 & 1 & 2 & 4 & 8\\
	\midrule 
		8 Cycles/Length 64 Signal & 0.00 & 29.88 & 103.76 & 181.72 & 211.88 \\ 
		11 Cycles/Length 64 Signal & 0.00 & 35.28 & 118.52 & 201.76 & 111.04 \\ 
  	\bottomrule 
\end{tabular}
\vspace{2em}

\begin{tabular}{l*{5}{c}}
	Estimates of $f$, $n = 32$ & \multicolumn{5}{c}{Standard Deviation ($\sigma$)} \\ 
	\cmidrule(lr){2-6} 
	Frequency & 0 & 1 & 2 & 4 & 8\\
	\midrule 
		8 Cycles/Length 64 Signal & 0.00 & 0.02 & 3.62 & 22.15 & 25.80 \\ 
		11 Cycles/Length 64 Signal & 0.00 & 0.01 & 8.46 & 17.75 & 29.00 \\ 
  	\bottomrule 
\end{tabular}
\vspace{2em}

\begin{tabular}{l*{5}{c}}
	\toprule 
	Estimates of $\phi$, $n = 32$ & \multicolumn{5}{c}{Standard Deviation ($\sigma$)} \\ 
	\cmidrule(lr){2-6} 
	Frequency & 0 & 1 & 2 & 4 & 8\\
	\midrule 
		8 Cycles/Length 64 Signal & 0.00 & 2.00 & 5.82 & 7.49 & 7.49 \\ 
		11 Cycles/Length 64 Signal & 0.00 & 2.89 & 5.92 & 4.80 & 7.15 \\ 
  	\bottomrule 
\end{tabular}

\begin{tabular}{l*{5}{c}}
	Estimates of $k$, $n = 32$ & \multicolumn{5}{c}{Standard Deviation ($\sigma$)} \\ 
	\cmidrule(lr){2-6} 
	Frequency & 0 & 1 & 2 & 4 & 8\\
	\midrule 
		8 Cycles/Length 64 Signal & 1.00 & 1.00 & 0.92 & 0.40 & 0.28 \\ 
		11 Cycles/Length 64 Signal & 1.00 & 0.64 & 0.24 & 0.24 & 0.04 \\ 
  	\bottomrule 
\end{tabular}

\caption{These six tables show the accuracy of our estimation procedure in our simulation study when $n = 32$. The first $5$ tables give the Mean Squared Error (MSE) for the parameters $A$, $S$, $L$, $f$, and $\phi$, and the $6^{th}$ table gives the fraction of simulations out of 25 where we searched the correct frequency time-series.}
\end{table}

\end{document}